\documentclass{system-security-article}

\usepackage{authblk}

\newcommand{\floor}[1]{\left\lfloor #1 \right\rfloor}
\newcommand{\ceil}[1]{\left\lceil #1 \right\rceil}
\newcommand{\abs}[1]{\left| #1 \right|}

\newcommand{\set}[1]{\left\{ #1 \right\}}
\newcommand{\setbuild}[2]{\left\{ #1 \,:\, #2 \right\}}

\providecommand{\href}[2]{#2}

\title{Canonical Byte-String Encoding for Finite-Ring Cryptosystems}

\author[1]{Kyrylo Riabov\thanks{\texttt{kyryl.riabov@gmail.com}; \href{https://orcid.org/0009-0003-4118-8492}{ORCID 0009-0003-4118-8492}}}
\author[1]{Serhii Kryvyi\thanks{\texttt{sl.krivoi@gmail.com}; \href{https://orcid.org/0000-0003-4231-0691}{ORCID 0000-0003-4231-0691}}}

\affil[1]{Taras Shevchenko National University of Kyiv, Faculty of Computer Science and Cybernetics, 4d Akademika Glushkova Ave., Kyiv, 03680, Ukraine}

\date{}

\newcommand{\Zm}{\ensuremath{\mathbb{Z}_m}}
\newcommand{\Gm}{\ensuremath{G_m}}

\newcommand{\ByteString}{\mathsf{ByteString}}
\newcommand{\ByteStringBounded}{\ByteString_{w}}
\newcommand{\EncodeM}{\mathsf{Encode}_m}
\newcommand{\DecodeM}{\mathsf{Decode}_m}
\newcommand{\EncodePrefixM}{\mathsf{EncodePrefix}_m}
\newcommand{\DecodePrefixM}{\mathsf{DecodePrefix}_m}
\newcommand{\List}{\mathsf{List}}
\newcommand{\Fin}{\mathsf{Fin}}
\newcommand{\Except}{\mathsf{Except}}
\newcommand{\Error}{\mathsf{Error}}
\newcommand{\OK}{\mathsf{ok}}
\newcommand{\Some}{\mathsf{some}}

\newcommand{\concat}{\mathbin{\|}}
\newcommand{\BaseM}{base-\(m\)}
\newcommand{\BaseMLen}{\BaseM{}-len}
\newcommand{\BytesVar}{\mathit{bytes}}

\newcommand{\LeanParams}[2]{\href{https://github.com/KyrylR/phd-symmetric-cryptography/blob/8e903222a7df1c4365592e4b16fa752a924b8977/lean/encoding-dr-1/EncodingProofs/BaseMLen/Params.lean\#L#1-L#2}{[Params.lean, L#1--L#2]}}
\newcommand{\LeanSpec}[2]{\href{https://github.com/KyrylR/phd-symmetric-cryptography/blob/8e903222a7df1c4365592e4b16fa752a924b8977/lean/encoding-dr-1/EncodingProofs/BaseMLen/Spec.lean\#L#1-L#2}{[Spec.lean, L#1--L#2]}}
\newcommand{\LeanPrefix}[2]{\href{https://github.com/KyrylR/phd-symmetric-cryptography/blob/8e903222a7df1c4365592e4b16fa752a924b8977/lean/encoding-dr-1/EncodingProofs/BaseMLen/Prefix.lean\#L#1-L#2}{[Prefix.lean, L#1--L#2]}}
\newcommand{\LeanPayload}[2]{\href{https://github.com/KyrylR/phd-symmetric-cryptography/blob/8e903222a7df1c4365592e4b16fa752a924b8977/lean/encoding-dr-1/EncodingProofs/BaseMLen/Payload.lean\#L#1-L#2}{[Payload.lean, L#1--L#2]}}
\newcommand{\LeanRenorm}[2]{\href{https://github.com/KyrylR/phd-symmetric-cryptography/blob/8e903222a7df1c4365592e4b16fa752a924b8977/lean/encoding-dr-1/EncodingProofs/BaseMLen/Renorm.lean\#L#1-L#2}{[Renorm.lean, L#1--L#2]}}
\newcommand{\LeanStream}[2]{\href{https://github.com/KyrylR/phd-symmetric-cryptography/blob/8e903222a7df1c4365592e4b16fa752a924b8977/lean/encoding-dr-1/EncodingProofs/BaseMLen/Stream.lean\#L#1-L#2}{[Stream.lean, L#1--L#2]}}
\newcommand{\LeanExamples}[2]{\href{https://github.com/KyrylR/phd-symmetric-cryptography/blob/8e903222a7df1c4365592e4b16fa752a924b8977/lean/encoding-dr-1/EncodingProofs/BaseMLen/Examples.lean\#L#1-L#2}{[Examples.lean, L#1--L#2]}}

\begin{document}

    \maketitle

    \begin{abstract}
        Ring-mapping cryptosystems need a canonical procedure that converts ordinary byte strings
        into valid residues before any algebraic encryption step can begin.
        This paper isolates that representation layer and presents the \BaseMLen{} codec, a
        canonical map from byte strings of length less than \(2^{w}\) to lists of residues modulo \(m\).
        The construction uses two fixed-width \BaseM{} headers, for the byte length and the final
        normalized state, together with a uniform payload transduction inspired by the normalized-state
        viewpoint of ANS and rANS~\cite{duda2013ans,giesen2014rans}.

        The resulting format is deterministic, self-delimiting, and parameterized by the target modulus,
        which makes it suitable as a transport layer for finite-ring protocols rather than as an entropy coder.
        Its length-first wire format keeps the abstract specification close to the implementation interface:
        the decoder can recover the declared output size before processing the remaining payload stream,
        while the encoder remains linear in the emitted digit stream.
        For every supported modulus, decoding reconstructs the original bytes exactly; because the byte length
        is encoded explicitly, decoding also tolerates appended valid suffix digits and stops after the declared
        payload has been recovered.

        The paper links the abstract codec to a companion Rust implementation and a Lean 4 formalization
        with machine-checked proofs.
        The formal development establishes fixed-width header inversion, normalization-window bounds on the
        stored payload state, stream-level roundtrip correctness, and validity of every emitted residue.

        We also analyze asymptotic representation cost, report representative native-code throughput,
        and discuss practical limits such as exposed message length, implementation-level allocation bounds,
        and the distinction between deterministic serialization and cryptographic protection.
    \end{abstract}

    \keywords{finite-ring cryptography, message representation, \BaseM{} encoding, formal verification}

    \section{Introduction}
    \label{sec:intro}

    Finite-ring cryptosystems such as the ring-image construction in~\cite{kryvyi2025symmetric}
    require messages to be represented as elements of \(\Zm\) before encryption or decryption.
    Existing examples use hand-built alphabets and small fixed symbol tables, which illustrate the algebra
    but leave open the general problem of how arbitrary UTF-8 text~\cite{rfc3629}
    or binary data should be represented before the ring-level encryption step begins.

    This representation problem extends beyond the mentioned papers.
    Any protocol that operates on plaintext data over \(\Zm\) must first represent that
    data as residues in \(\Zm\) before applying its core transformations.
    Canonical byte-level encodings are standard in other settings, including base-\(N\) transfer encodings~\cite{rfc4648},
    ASN.1 canonical and distinguished encoding rules~\cite{x690}, deterministic CBOR-based interchange~\cite{rfc8949},
    and public-key standards with explicit octet-string and integer conversion rules~\cite{rfc8017}.
    Those formats solve adjacent interoperability problems, but they do not provide a modulus-parameterized residue
    stream tailored to protocols that natively operate over \(\Zm\).
    This aspect is also reflected in the ERC project SYMPZON~\cite{sympzon}, which studies symmetric
    primitives over integer rings for applications such as format-preserving encryption,
    multi-party computation, fully homomorphic encryption, and zero-knowledge.
    The present work provides a concrete representation layer that can support such constructions.

    This paper defines a canonical byte-to-residue codec for finite-ring cryptosystems.
    The preferred \BaseMLen{} construction maps each byte string of length less than \(2^{w}\) to a residue stream in \(\Zm\).
    Its wire format consists of three components: a fixed-width base-\(m\) encoding of the byte length and the final
    payload state, and a payload stream produced by a uniform base-\(m\) transduction.
    The decoder reconstructs the original bytes exactly and stops after the declared length,
    so any additional valid suffix digits are ignored.

    The present construction is not an entropy coder.
    Its asymptotic payload cost reflects a change of output alphabet rather than a reduction in information content;
    larger moduli may reduce the number of transmitted residue symbols, but bit-level compression would require an
    explicit probabilistic source model and lies outside the scope of this paper.
    The payload mechanism follows the normalized-state perspective of ANS and rANS, but is used here to
    obtain a deterministic, self-delimiting residue format rather than a new entropy coder or security primitive~\cite{duda2013ans,giesen2014rans}.
    The problem addressed is orthogonal to ring-based lattice cryptography such as
    NTRU and ML-KEM~\cite{ntru1998,nist2024post}, which rely on different assumptions and target different protocol goals.

    The paper makes three contributions.
    \begin{enumerate}
        \item It defines a canonical, modulus-parameterized byte-to-residue codec for finite-ring protocols,
        centered on the \BaseMLen{} wire format.
        \item It establishes exact decoding, suffix tolerance, and fixed-width header inversion for supported
        moduli and byte strings of length less than \(2^{w}\).
        \item It provides a reference implementation and formal specification, consisting of a
        Rust library and a Lean 4 formalization with machine-checked proofs.
    \end{enumerate}

    The remainder of the paper formalizes the model and proof scope, specifies the codec, states the
    main guarantees, and evaluates its complexity and practical performance.

    \section{Model, implementation scope, and Lean 4 formalization}
    \label{sec:model}

    \subsection{Core definitions and boundaries}
    \label{subsec:assumptions}

    Throughout the abstract development, we fix a global word width \(w \in \mathbb{N}\) and study those moduli \(m\)
    for which the codec parameters are well formed at that width.
    The evaluation in \Cref{sec:evaluation} later instantiates this generic theory at \(w=64\).

    \begin{definition}[Supported modulus]
        \label{def:supported-modulus}
        For the fixed word width \(w\), a \emph{supported modulus} is a modulus \(m\)
        for which the abstract codec parameters are well formed, namely
        \[
            2 \le m
            \qquad \text{and} \qquad
            256 \le \mathsf{decoderLowerBound}(w,m).
        \]
        This definition is formalized in \LeanParams{51}{55}.
    \end{definition}

    \begin{definition}[Admissible byte string]
        \label{def:byte-string-bounded}
        An \emph{admissible byte string} is a finite sequence of bytes of length less than \(2^{w}\).
        We write
        \[
            \ByteStringBounded =
            \setbuild{\BytesVar \in \set{0,1,\ldots,255}^*}{\abs{\BytesVar} < 2^{w}}.
        \]
        This definition is formalized in \LeanSpec{17}{19}.
    \end{definition}

    For each supported modulus \(m\), the abstract codec interface is
    \[
        \EncodeM : \ByteStringBounded \to \List(\Fin(m)),
        \qquad
        \DecodeM : \List(\Fin(m)) \to \Except(\Error, \ByteStringBounded).
    \]
    The encoder produces valid \BaseM{} digits only, and the decoder either reconstructs an
    admissible byte string or reports failure.
    The abstract encoder and decoder are formalized in \LeanSpec{146}{182},
    while the fact that every emitted symbol lies in \([0,m)\) is recorded in \LeanStream{27}{33}.

    The core model treats residue digits as elements of \(\Fin(m)\) from the outset, so malformed raw
    naturals outside the residue range do not appear inside the proved interface; see \LeanParams{32}{34}
    for the digit type and \LeanSpec{184}{239} for the boundary validators on plain numeric inputs.
    The zero-length case is explicit as well: if the length prefix decodes to \(0\), the
    decoder returns the empty byte string immediately, as in \LeanSpec{159}{169}.
    The corresponding zero-length decoding lemmas appear in \LeanStream{135}{172}.

    \subsection{Implementation and proof scope}
    \label{subsec:boundary}

    The public artifacts consist of a Rust library and a Lean 4 formalization with machine-checked proofs.
    Both are released in the same project repository~\cite{riabov2026encodingrepo}.
    On the Rust side, the main byte-oriented entry points are \(\mathsf{encode\_bytes\_base\_m\_len}\) and
    \(\mathsf{decode\_bytes\_base\_m\_len}\), with UTF-8 frontends above them.
    On the proof side, Lean formalizes the abstract codec, including supported parameters, fixed-width prefixes,
    payload transduction, and end-to-end stream properties in Lean 4~\cite{demoura2021lean4};
    the trusted-kernel and elaboration lineage behind the environment is described in~\cite{demoura2015lean}.
    The current development is width-parametric rather than tied to one machine word size, and the repository
    includes concrete \(64\)-, \(128\)-, and \(256\)-bit instantiations and smoke theorems in \LeanExamples{58}{70}.

    The proof scope is intentionally limited to the abstract specification boundary.
    It does \emph{not} verify the Rust implementation line by line.
    Parser hardening, allocation limits, low-level I/O, and other engineering details therefore
    remain implementation concerns rather than machine-checked theorems.

    \section{The Base-m-len codec}
    \label{sec:codec}

    \subsection{Fixed-width base-m headers}
    \label{subsec:prefix}

    \begin{definition}[Prefix width]
        \label{def:prefix-width}
        For the fixed word width \(w\) and a supported modulus \(m\), let
        \[
            k = \min \set{j \in \mathbb{N}_0 : m^j \ge 2^{w}}.
        \]
        The fixed-width prefix encoder \(\EncodePrefixM\) represents every \(x < 2^{w}\)
        as exactly \(k\) little-endian \BaseM{} digits.
    \end{definition}

    The same width \(k\) is used for both fixed prefixes in the wire format: the byte length and the final payload state.
    In the Lean 4 formalization, the definition of this width and its least-width property appear in \LeanParams{123}{157},
    with bundled forms for parameter packages in \LeanParams{215}{271}.

    \subsection{Uniform base-m payload transduction}
    \label{subsec:payload}

    Let \(L\) denote the decoder lower bound associated with the supported modulus.
    The payload layer maintains a \BaseM{} state and applies the uniform byte update
    \[
        x \leftarrow 256x + b.
    \]
    In the Lean 4 formalization, the bundled definitions of the decoder lower bound \(L\) and the encoder
    threshold appear in \LeanParams{220}{231}, and the encoder-side and decoder-side renormalization procedures
    together with the payload transducer are defined in \LeanSpec{81}{144}.
    Operationally, the encoder scans the input bytes from right to left.
    Before each byte update, it emits \BaseM{} digits until the current state is below the encoder threshold.
    After the full reverse scan, the emitted payload digits are exposed in FIFO order,
    so the decoder can read the final wire format from left to right.

    This mechanism is close in spirit to ANS and rANS, but the purpose is different.
    Classical ANS work studies entropy coding relative to a probabilistic source model.
    Here the byte update is uniform, the output alphabet is fixed by the target modulus \(m\),
    and the goal is a deterministic, self-delimiting residue representation that can serve as a
    canonical message layer for finite-ring protocols~\cite{duda2013ans,giesen2014rans}.

    \subsection{Wire format and algorithms}
    \label{subsec:wire}

    \begin{definition}[Wire format]
        \label{def:wire-format}
        For every admissible byte string \(\BytesVar\),
        \[
            \EncodeM(\BytesVar) = \mathit{len} \concat \mathit{state} \concat \mathit{payload},
        \]
        where \(\mathit{len} = \EncodePrefixM(\abs{\BytesVar})\), \(\mathit{state}\) is the fixed-width encoding of
        the final payload state, and \(\mathit{payload}\) is the FIFO \BaseM{} digit stream produced by the payload transducer.
    \end{definition}
    In the Lean 4 formalization, this concatenated wire format is the definition of the abstract encoder; see \LeanSpec{146}{153}.

    Decoding is staged in the same order.
    The decoder first reads the length prefix.
    If that prefix is \(0\), it returns the empty byte string.
    Otherwise, it reads the state prefix and then consumes exactly as many payload digits
    as are needed to reconstruct the declared number of bytes.
    Any valid digits that remain after those bytes are recovered are left unread.
    In the Lean 4 formalization, the staged decoder is defined in \LeanSpec{159}{182}; its immediate zero-length return
    already appears in \LeanSpec{159}{169}, with dedicated zero-length stream lemmas in \LeanStream{135}{172}.
    The padding-tolerant end-to-end theorem that trailing suffix digits are ignored appears in \LeanStream{174}{266}.

    \begin{algorithm}[t]
    \caption{Base-m-len encode}
    \KwInput{Admissible byte string \(\BytesVar\), supported modulus \(m\)}
    \KwOutput{Residues in \(\List(\Fin(m))\)}
    Set \(k \gets \min \set{j \in \mathbb{N}_0 : m^j \ge 2^{w}}\)\;
    Encode \(\abs{\BytesVar}\) as \(k\) little-endian \BaseM{} digits\;
    Initialize the payload state \(x\) at the decoder lower bound \(L\)\;
    \For{byte values \(b\) of \(\BytesVar\) in reverse order}{
        While \(x\) exceeds the encoder threshold, emit \(x \bmod m\) and set
        \(x \gets \floor{x/m}\)\;
        Set \(x \gets 256x + b\)\;
    }
    Encode the final state \(x\) as \(k\) little-endian \BaseM{} digits\;
    Reverse the emitted payload digits and return
    \(\mathit{len} \concat \mathit{state} \concat \mathit{payload}\)\;\label{alg:encode-alg}
    \end{algorithm}

    \begin{algorithm}[t]
    \caption{Base-m-len decode}
    \KwInput{Residue stream \(digits\), supported modulus \(m\)}
    \KwOutput{\(\Except(\Error, \ByteStringBounded)\)}
    Set \(k \gets \min \set{j \in \mathbb{N}_0 : m^j \ge 2^{w}}\)\;
    Read the first \(k\) digits as the declared byte length \(\ell\)\;
    \If{\(\ell = 0\)}{
        Return \(\OK([])\)\;
    }
    Read the next \(k\) digits as the initial payload state \(x\)\;
    \For{\(i = 1\) \KwTo \(\ell\)}{
        Output \(x \bmod 256\)\;
        Set \(x \gets \floor{x/256}\)\;
        While \(x\) is below the decoder lower bound, consume the next payload digit
        \(d\) and set \(x \gets xm + d\)\;
    }
    Ignore any remaining trailing suffix and return the reconstructed bytes\;
    \label{alg:decode-alg}
    \end{algorithm}

    In the Lean 4 formalization, Algorithms~\ref{alg:encode-alg} and~\ref{alg:decode-alg} correspond directly to the definitions in \LeanSpec{81}{182}.

    \begin{example}[Worked example for ``Hi'' at \(m=50\) and \(w=64\)]
        In the Lean 4 formalization, the companion examples in \LeanExamples{90}{93} evaluate the admissible byte string \([72,105]\) to the residue stream
        \begin{lstlisting}[numbers=none,label={lst:lstlisting}]
[2, 0, 0, 0, 0, 0, 0, 0, 0, 0, 0, 0, 12, 8, 11, 36, 6, 32, 19, 0, 38, 1, 49, 1, 1, 48]
        \end{lstlisting}
        Here \(k = 12\).
        The first 12 digits are the fixed-width encoding of the byte length
        \(2\), the next 12 digits encode the stored payload state, and the final two digits form the payload.
        Decoding returns \(\OK([72,105])\), hence the original two-byte input.
    \end{example}

    \section{Correctness guarantees}
    \label{sec:correctness}

    The proof development is split across six Lean files, from parameters and fixed-width prefixes to
    renormalization, payload inversion, and stream-level correctness.

    \subsection{Main theorems}
    \label{subsec:main-theorems}

    \begin{theorem}[Stream correctness]
        \label{thm:stream-roundtrip}
        For every supported modulus \(m\) and every admissible byte string
        \(\BytesVar \in \ByteStringBounded\),
        \[
            \DecodeM(\EncodeM(\BytesVar)) = \OK(\BytesVar).
        \]
        Moreover, for every valid suffix \(t \in \List(\Fin(m))\),
        \[
            \DecodeM(\EncodeM(\BytesVar) \concat t) = \OK(\BytesVar).
        \]
    \end{theorem}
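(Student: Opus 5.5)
We prove the suffix-tolerant statement directly and recover the plain roundtrip as the special case \(t = []\). The argument follows the staged structure of \(\DecodeM\): header inversion, then a payload invariant, then assembly.

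\textbf{Step 1: header inversion.} For every \(x < 2^{w}\) and every digit list \(r\), reading \(k\) little-endian digits from \(\EncodePrefixM(x) \concat r\) returns \(x\) together with the remainder \(r\). This follows by induction on \(k\) using \(x = (x \bmod m) + m\floor{x/m}\), together with the choice of \(k\) in \Cref{def:prefix-width}, which gives \(x < 2^{w} \le m^{k}\). We apply it twice.
\begin{itemize}
\item To the length, since \(\abs{\BytesVar} < 2^{w}\) by admissibility.
\item To the final payload state \(x_0\). This requires \(x_0 < 2^{w}\), which comes from the normalization bound of Step 2.
\end{itemize}
If \(\abs{\BytesVar} = 0\), the decoder returns \(\OK([])\) immediately, and \(t\) is never inspected.

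\textbf{Step 2: the renormalization invariant.} Write \(U\) for the encoder threshold. The interval \([L, 256U)\) should be a \(b\)-unique window: \(U\) is chosen, for instance as \(U = mL/256\) up to the exact Lean definition, so that the following two facts hold.
\begin{itemize}
\item \textbf{Encoder side.} If \(x \in [L, 256U)\), the emit loop brings \(x\) into \([U/m, U)\), or into \([L/256, U)\). After \(x \gets 256x + b\) the state lies again in \([L, 256U)\), with \(256U \le 2^{w}\) guaranteed by the supported-modulus condition \(256 \le L\) and the definition of the parameters.
\item \textbf{Decoder side.} Starting from \(y = 256x + b\) with \(x\) the post-emit state, the decoder outputs \(y \bmod 256 = b\) (because \(b < 256\)) and sets \(y \gets \floor{y/256} = x\). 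Its refill loop then pulls digits until the state reaches \(L\).
\end{itemize}

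The key lemma is that this refill loop exactly undoes the encoder's emit loop. The encoder emitted digits \(d_1, \dots, d_j\) (least-significant first) going from \(x'\) to \(x\). The decoder's loop, fed these digits in reverse order followed by arbitrary further digits, rebuilds \(x'\) and stops there. It stops because every intermediate state is \(< L\), while \(x' \ge L\). This rests on the uniqueness property: for \(z \ge L\) we have \(\floor{z/m} < L\) iff \(z < mL\), and the encoder's stopping condition \(x < U\) matches this.

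I expect Step 2 to be the main obstacle: showing that the encoder's emit count equals the decoder's refill count, i.e.\ that the two loop guards are exact duals across the interval boundaries. I would first isolate a lemma of the form ``for \(x' \in [L, mL)\)-type windows, \(\mathsf{renormDec}(\mathsf{renormEnc}(x')) = x'\) with leftover stream unchanged,'' proved by strong induction on \(x'\). Everything else is bookkeeping.

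\textbf{Step 3: stream assembly.} We induct on the byte list, processing bytes left to right on the decoder side and right to left on the encoder side. The invariant is: if the decoder is in state \(x_i\) (the encoder's state just after absorbing byte \(i\)) with remaining stream \(P_i \concat t\), where \(P_i\) is the FIFO payload emitted before byte \(i\)'s update, then decoding the remaining \(\ell - i\) bytes yields \(\BytesVar[i..]\). The base case starts from \(x_0\) read from the state header.

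Each inductive step is Step 2's key lemma. The initial state \(L\) is consistent with the last refill, and the payload digits are exactly consumed because the FIFO reversal matches LIFO emission. The loop terminates after \(\ell\) outputs with \(t\) untouched, which gives \(\DecodeM(\EncodeM(\BytesVar) \concat t) = \OK(\BytesVar)\). Taking \(t = []\) yields the first claim.
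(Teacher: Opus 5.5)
Your proposal is correct and follows the paper's proof essentially step for step. The paper also proves the suffix-tolerant statement directly and obtains the plain roundtrip as the \(t = []\) specialization. It inverts the length header, returns immediately in the zero-length branch, inverts the state header using the window bound \(L \le \mathsf{state}(\BytesVar) < Lm \le 2^{w}-1\), and finishes with a payload roundtrip built by byte-level induction from a renormalization-roundtrip lemma that leaves the unread suffix untouched.
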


    \begin{proof}
        In the Lean 4 formalization, the suffix-tolerant theorem is \LeanStream{174}{254}.
        Its empty-suffix specialization is \LeanStream{256}{266}.
        The proof uses the header-parsing lemmas in \LeanStream{35}{133}.
        The zero-length lemmas appear in \LeanStream{135}{172}, and the payload roundtrip theorem in
        \LeanPayload{79}{141}.
        It first recovers the declared length prefix, then the stored state prefix, and finally applies the
        payload roundtrip theorem to the remaining suffix.
        The extra-suffix clause follows because decoding stops after reconstructing the declared number
        of bytes and therefore leaves any remaining valid suffix unread.
    \end{proof}

    \begin{theorem}[Header correctness]
        \label{thm:k-minimal}
        For every supported modulus \(m\), let
        \[
            k = \min \set{j \in \mathbb{N}_0 : m^j \ge 2^{w}}.
        \]
        Then
        \[
            m^k \ge 2^{w}
            \qquad \text{and} \qquad
            \forall j < k,\; m^j < 2^{w}.
        \]
        Moreover, for every \(x < 2^{w}\),
        \[
            \DecodePrefixM(\EncodePrefixM(x)) = \Some(x).
        \]
    \end{theorem}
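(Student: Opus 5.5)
The plan has two parts: first the minimality facts about \(k\), then the prefix inversion.

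\emph{Minimality of \(k\).} Since \(m \ge 2\), we have \(m^j \ge 2^j\) for all \(j\), so \(j = w\) already satisfies \(m^j \ge 2^{w}\). The set \(\setbuild{j \in \mathbb{N}_0}{m^j \ge 2^{w}}\) is therefore nonempty and has a least element. The first claim, \(m^k \ge 2^{w}\), is the defining property of that least element. The second claim, \(m^j < 2^{w}\) for all \(j < k\), is exactly its minimality: if some \(j < k\) had \(m^j \ge 2^{w}\), then \(k\) would not be least. In Lean I would define \(k\) through \(\mathsf{Nat.find}\) on the witness \(j = w\). Both properties then come out of the corresponding \(\mathsf{Nat.find\_spec}\) and \(\mathsf{Nat.find\_min}\) lemmas, which is presumably what \LeanParams{123}{157} records. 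I would not need monotonicity of \(j \mapsto m^j\) here, although it holds and gives the equivalent ``threshold'' form of the statement.

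\emph{Prefix inversion.} I would prove a general fixed-width lemma by induction on \(n\): for every \(x < m^n\), the \(n\) little-endian digits \(x \bmod m, \floor{x/m} \bmod m, \ldots\) decode back to \(x\) via Horner evaluation \(\sum_i d_i m^i\). The base case is \(n = 0\), where \(x < 1\) forces \(x = 0\), which matches the empty digit list. For the inductive step, write \(x = (x \bmod m) + m\floor{x/m}\) and note that \(\floor{x/m} < m^{n-1}\). Applying the induction hypothesis to \(\floor{x/m}\) then reassembles \(x\). I would also check that each emitted digit is \(< m\), so it is a legitimate element of \(\Fin(m)\), and that the digit list has length exactly \(n\). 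The decoder would then read exactly those \(n\) digits.

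The claimed identity \(\DecodePrefixM(\EncodePrefixM(x)) = \Some(x)\) follows by taking \(n = k\). From \(x < 2^{w} \le m^k\) we get \(x < m^k\), so the lemma applies. The remaining point is the decoder's range check, assuming \(\DecodePrefixM\) returns \(\Some\) only when the evaluated value is \(< 2^{w}\). This check passes trivially because the evaluated value is \(x\) itself.

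The main obstacle I expect is bookkeeping rather than mathematics. It lies in aligning the exact shape of \(\DecodePrefixM\) (its fold direction, its handling of a short input, and its bound check) with the inductive digit lemma. I would first state the induction against a plain list-evaluation function. Only after that would I unfold \(\DecodePrefixM\) to that function on inputs of length \(k\).
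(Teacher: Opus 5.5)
Your proposal is correct and takes essentially the same route as the paper. Both parts of the minimality claim come directly from \(k\) being the least exponent with \(m^k \ge 2^{w}\), and prefix inversion follows because \(x < 2^{w} \le m^k\) places \(x\) in the range where the \(k\)-digit little-endian base-\(m\) expansion is exact; you just spell out the nonemptiness witness \(j = w\) and the digit induction that the paper leaves to its Lean lemmas.
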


    \begin{proof}
        In the Lean 4 formalization, the least-width property is formalized in \LeanParams{123}{157},
        with the bundled statement in \LeanParams{260}{271}; fixed-width prefix inversion is proved in \LeanPrefix{65}{89}.
        Minimality is immediate from the definition of \(k\) as the least exponent with \(m^k \ge 2^{w}\).
        Once that bound is known, every \(x < 2^{w}\) lies in the domain of the fixed-width \BaseM{} prefix decoder.
    \end{proof}

    \begin{proposition}[Payload-state bounds]
        \label{prop:state-window}
        For every supported modulus \(m\) and every admissible byte string \(\BytesVar\), the payload encoder
        returns a state in the normalization window
        \[
            L \le \mathsf{state}(\BytesVar) < Lm.
        \]
        In particular,
        \[
            \mathsf{state}(\BytesVar) < 2^{w}.
        \]
    \end{proposition}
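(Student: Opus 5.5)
The plan is to prove a loop invariant by induction on the reverse scan of \(\BytesVar\): after every completed byte update, the payload state \(x\) lies in the window \([L, Lm)\). Write \(T\) for the encoder threshold from \LeanParams{220}{231}. I assume the bundled parameters are tied together roughly as follows; the first step is to unfold the definitions in \LeanParams{220}{231} and confirm or adjust this.
\[
  256 \mid L, \qquad T = \frac{L}{256}\, m, \qquad L\,m \le 2^{w}.
\]
The last relation is presumably how \(\mathsf{decoderLowerBound}(w,m)\) is chosen, for example as a largest admissible multiple of \(256\). The supported-modulus hypothesis \(256 \le L\) and \(m \ge 2\) should then provide \(L/256 \ge 1\) and \(L < Lm\).

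\emph{Base case.} The state is initialized at \(L\), and \(L \le L < Lm\) because \(m \ge 2\). This also handles the empty byte string, whose final state is exactly \(L\).

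\emph{Inductive step.} Assume \(L \le x < Lm\) before processing byte \(b\). The renormalization loop emits digits and replaces \(x\) by \(\lfloor x/m\rfloor\) while \(x \ge T\). Let \(x'\) be the value when the loop stops. I will show \(L/256 \le x' < T\) in two steps.
\begin{itemize}
\item The upper bound \(x' < T\) is just the loop exit condition.
\item For the lower bound there are two cases. If no digit was emitted, then \(x' = x \ge L \ge L/256\). Otherwise \(x' = \lfloor y/m \rfloor\) for some \(y \ge T = (L/256)\,m\), so \(x' \ge L/256\).
\end{itemize}
Termination of the loop is immediate because \(x\) strictly decreases whenever it is at least \(T \ge 1\). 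Given \(L/256 \le x' < T\) and \(0 \le b \le 255\), the updated state satisfies
\[
  L \le 256x' + b \le 256(T-1) + 255 < 256\,T = Lm .
\]
This restores the invariant. The final state is the one reached after the last byte, so \(L \le \mathsf{state}(\BytesVar) < Lm\).

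\emph{Second claim.} From \(\mathsf{state}(\BytesVar) < Lm \le 2^{w}\) we get \(\mathsf{state}(\BytesVar) < 2^{w}\). Combined with Theorem~\ref{thm:k-minimal}, this shows the state fits in the \(k\)-digit fixed-width prefix.

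\emph{Main obstacle.} The induction itself is routine. The hard part is the parameter arithmetic: establishing the exact relations among \(L\), \(T\), \(m\) and \(2^{w}\) from their Lean definitions, in particular the divisibility \(256 \mid L\), the identity \(256\,T = Lm\), and \(Lm \le 2^{w}\). If the threshold is instead defined with a floor, for example \(T = \lfloor Lm/256 \rfloor\), or if \(L\) need not be a multiple of \(256\), I will redo the two boundary inequalities with floors. Those inequalities are \(256\lfloor T/m\rfloor \ge L\) for the lower end and \(256\,T \le Lm\) for the upper end, and I would check them directly from the definition, which I expect to be chosen precisely to make them hold.
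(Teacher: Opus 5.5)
Your proof is correct and is the same argument the paper gives. The window \(L \le x < Lm\) is the inductive invariant of the payload transducer: the state starts at \(L\), renormalization brings it into \([L/256, T)\), and then the update \(x \leftarrow 256x+b\) is applied. The \(2^{w}\) bound follows from the global estimate, which the paper records as \(Lm \le 2^{w}-1\), slightly stronger than the \(Lm \le 2^{w}\) you need.

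The parameter relations you flagged as the main obstacle appear to be the right ones. The worked example at \(m=50\), \(w=64\) is reproduced exactly (both the state prefix and the payload \([1,48]\)) by \(L = 256\lfloor (2^{64}-1)/(256\cdot 50)\rfloor\) with threshold \(T=(L/256)\,m\), which matches your assumed \(256 \mid L\) and \(256T = Lm\).
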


    \begin{proof}
        In the Lean 4 formalization, the normalization-window bound and its \(2^{w}\)
        corollary are proved in \LeanPayload{17}{77}.
        The first part is the inductive invariant of the payload transducer; the second follows from the
        global estimate \(Lm \le 2^{w}-1\).
    \end{proof}

    \subsection{Proof idea and traceability}
    \label{subsec:proof-skeleton}

    The end-to-end proof has three layers.
    In the Lean 4 formalization, fixed-width prefix inversion appears in \LeanPrefix{30}{89},
    renormalization roundtrip in \LeanRenorm{219}{294}, payload roundtrip in \LeanPayload{79}{141},
    and the final stream theorem in \LeanStream{174}{266}, with the zero-length branch in \LeanStream{135}{172}.
    Fixed-width \BaseM{} headers invert because the chosen width covers the bounded \(w\)-bit range;
    renormalization blocks roundtrip the state while preserving untouched suffixes; and a byte-level induction
    lifts that result to the payload and then to the full stream theorem.

    \section{Evaluation}
    \label{sec:evaluation}

    The evaluation addresses two questions: what the representation costs in theory,
    and whether the preferred streaming construction is practical in native code.
    Throughout this section, we instantiate the general theory at \(w=64\), so the reported header widths and
    thresholds are the concrete \(64\)-bit values of the generic formulas from the preceding sections.

    \subsection{Complexity and representation cost}
    \label{subsec:expansion}

    Let \(n = \abs{\BytesVar}\) and let \(k = \ceil{\log_m 2^{64}}\).
    The two fixed-width headers contribute exactly \(2k\) digits.
    The payload contributes a number of digits proportional to the byte length, with asymptotic rate
    \[
        \log_m 256 = \frac{8}{\log_2 m}
    \]
    digits per byte.
    \Cref{fig:expansion-curves} visualizes this trade-off over the range \(25 \le m \le 257\):
    larger moduli reduce the payload rate smoothly, while the fixed header width drops only when
    \(k\) crosses an integer threshold.
    Since each emitted digit is produced once and each consumed digit is read once, encoding
    and decoding are linear in the stream length.
    For every fixed supported modulus \(m\), this yields \(O(n)\) time and \(O(1)\) working state
    beyond the input and output buffers.

    \begin{figure}[t]
        \centering
        \caption{Representation cost as a function of the modulus \(m\) over the range \(25 \le m \le 257\) in the \(w=64\) instance.
        The upper panel shows the smooth payload rate \(\log_m 256\), while the lower panel shows the stepwise fixed header cost \(2k\).}
        \label{fig:expansion-curves}
        \begin{tikzpicture}
            \def\W{11.0}
            \def\Htop{3.2}
            \def\Hbot{3.0}
            \def\Gap{1.1}
            \def\Mmin{25}
            \def\Mmax{257}
            \def\RateMin{0.9}
            \def\RateMax{1.8}
            \def\HeadMin{16}
            \def\HeadMax{28}
            \pgfmathsetmacro{\TopOffset}{\Hbot + \Gap}

            \foreach \m/\label in {25/25,50/50,100/100,150/150,200/200,257/257} {
                \pgfmathsetmacro{\x}{\W * (\m - \Mmin) / (\Mmax - \Mmin)}
                \draw[gray!20] (\x, 0) -- (\x, \Hbot);
                \draw[gray!20] (\x, \TopOffset) -- (\x, \TopOffset + \Htop);
                \node[font=\scriptsize, below] at (\x, 0) {\label};
            }

            \foreach \r in {1.0,1.2,1.4,1.6,1.8} {
                \pgfmathsetmacro{\y}{\TopOffset + \Htop * (\r - \RateMin) / (\RateMax - \RateMin)}
                \draw[gray!20] (0, \y) -- (\W, \y);
                \node[font=\scriptsize, left] at (0, \y) {\pgfmathprintnumber[fixed,precision=1]{\r}};
            }

            \foreach \h in {16,18,20,22,24,26,28} {
                \pgfmathsetmacro{\y}{\Hbot * (\h - \HeadMin) / (\HeadMax - \HeadMin)}
                \draw[gray!20] (0, \y) -- (\W, \y);
                \node[font=\scriptsize, left] at (0, \y) {\h};
            }

            \draw[black!70] (0, 0) rectangle (\W, \Hbot);
            \draw[black!70] (0, \TopOffset) rectangle (\W, \TopOffset + \Htop);

            \draw[very thick, blue!70!black, samples=120, domain=\Mmin:\Mmax, smooth]
                plot ({\W * (\x - \Mmin) / (\Mmax - \Mmin)},
                      {\TopOffset + \Htop * ((ln(256) / ln(\x)) - \RateMin) / (\RateMax - \RateMin)});

            \foreach \a/\b/\d in {25/30/28,31/40/26,41/56/24,57/84/22,85/138/20,139/255/18,256/257/16} {
                \pgfmathsetmacro{\xa}{\W * (\a - \Mmin) / (\Mmax - \Mmin)}
                \pgfmathsetmacro{\xb}{\W * (\b - \Mmin) / (\Mmax - \Mmin)}
                \pgfmathsetmacro{\y}{\Hbot * (\d - \HeadMin) / (\HeadMax - \HeadMin)}
                \draw[very thick, orange!80!black] (\xa, \y) -- (\xb, \y);
            }

            \foreach \m/\rate/\hdr/\anchor/\dx/\dy in {
                25/1.7227/28/west/0.10/0.18,
                50/1.4175/24/west/0.10/0.18,
                257/0.9993/16/east/-0.10/0.18
            } {
                \pgfmathsetmacro{\x}{\W * (\m - \Mmin) / (\Mmax - \Mmin)}
                \pgfmathsetmacro{\ytop}{\TopOffset + \Htop * (\rate - \RateMin) / (\RateMax - \RateMin)}
                \pgfmathsetmacro{\ybot}{\Hbot * (\hdr - \HeadMin) / (\HeadMax - \HeadMin)}
                \fill[blue!70!black] (\x, \ytop) circle (1.5pt);
                \fill[orange!80!black] (\x, \ybot) circle (1.5pt);
                \node[font=\scriptsize, anchor=\anchor] at (\x + \dx, \ytop + \dy) {$m=\m$};
            }

            \node[font=\scriptsize\bfseries, anchor=west] at (0, \TopOffset + \Htop + 0.18) {Payload rate};
            \node[font=\scriptsize\bfseries, anchor=west] at (0, \Hbot + 0.18) {Fixed header width};
            \node[font=\scriptsize, rotate=90] at (-0.9, \TopOffset + 0.5 * \Htop) {$\log_m 256$};
            \node[font=\scriptsize, rotate=90] at (-0.9, 0.5 * \Hbot) {$2k$};
            \node[font=\scriptsize] at (0.5 * \W, -0.55) {modulus \(m\)};
        \end{tikzpicture}
    \end{figure}

    \begin{table}[t]
        \centering
        \caption{Asymptotic payload cost and fixed header size for representative moduli in the \(w=64\) instance.}
        \label{tab:expansion}
        \begin{tabular}{@{}ccc@{}}
            \toprule
            Modulus \(m\) & Payload digits per byte & Header digits \(2k\) \\
            \midrule
            25            & 1.723                   & 28                   \\
            50            & 1.417                   & 24                   \\
            257           & 0.9993                  & 16                   \\
            \bottomrule
        \end{tabular}
    \end{table}

    \subsection{Native performance}
    \label{subsec:throughput}

    The companion Rust code includes Criterion benchmarks for the preferred codec,
    exact $\mathsf{BigUint}$ radix-conversion baselines, and an external ANS reference used only for context.
    The representative measurements below were recorded on an Apple M3 Pro machine using the existing harness;
    methodology details are summarized in \Cref{sec:bench}.
    Those numbers are included only as practical evidence.

    \begin{table}[t]
        \centering
        \caption{Representative throughput measurements in MiB/s. The $\mathsf{BigUint}$
        baseline is only defined for \(m \le 256\). The ANS reference column provides context only;
        it does not implement the same length-delimited residue wire format.}
        \label{tab:benchmarks}
        \begin{tabular}{@{}lcccc@{}}
            \toprule
            Workload       & Native \(m=65\) & BigUint \(m=65\) & Native \(m=257\) & ANS ref. \\
            \midrule
            Encode 1\,KiB  & 163.6           & 25.2             & 217.0            & 218.1    \\
            Decode 1\,KiB  & 485.9           & 125.7            & 640.5            & 150.9    \\
            Encode 64\,KiB & 188.1           & 1.285            & 236.7            & 226.2    \\
            Decode 64\,KiB & 510.3           & 3.201            & 669.1            & 144.2    \\
            \bottomrule
        \end{tabular}
    \end{table}

    \subsection{Interpretation}
    \label{subsec:interpretation}

    Three conclusions are relevant for protocol design.
    \begin{enumerate}
        \item Small moduli remain usable, but \Cref{tab:expansion} makes their header and payload overhead
        explicit instead of hiding it behind ad hoc symbol tables.
        \item The preferred streaming construction is not only asymptotically linear but also much faster in
        practice than exact whole-payload radix conversion on large inputs, which is the meaningful baseline
        for a canonical ring-level representation.
        \item The \(m=257\) configuration is close to one residue per byte while still retaining explicit
        length recovery and suffix-tolerant decoding.
    \end{enumerate}

    \section{Security role and implementation limits}
    \label{sec:limits}

    This codec should be read as deterministic serialization, not as a security primitive in its own right~\cite{shannon1949}.
    Four limitations matter in practice.
    \begin{enumerate}
        \item The codec is public and deterministic, so it provides no confidentiality,
        integrity, or authenticity before the surrounding cryptographic protocol is applied.
        \item The preferred wire format exposes the exact byte length.
        If length hiding is required, padding must be enforced at the surrounding protocol layer.
        \item The Lean formalization proves the abstract specification boundary rather than the Rust
        implementation, so low-level parser and allocator behavior remain implementation responsibilities.
        \item Because decoding allocates according to the embedded length, implementations should
        enforce a caller-specified maximum decoded length for untrusted inputs.
    \end{enumerate}

    \section{Related work}
    \label{sec:related}

    Two lines of prior work are most directly relevant.

    First, ring-mapping cryptosystems motivate communication over residue domains such as
    \(\Zm\) and \(\Gm\).
    The ring-mapping cryptosystems studied in~\cite{kryvyi2022symmetric,kryvyi2025symmetric}
    provide the application setting in which byte strings must be transported through finite-ring representations.
    However, those works treat the representation layer as supporting machinery rather than as a separate codec problem.

    Second, the payload mechanism used here is structurally close to ANS and rANS~\cite{duda2013ans,giesen2014rans}.
    These methods maintain a normalized state and use renormalization while reading or emitting symbols in a fixed radix.
    The present construction adopts that operational pattern in a uniform byte setting, but uses it for deterministic,
    self-delimiting transport into a prescribed residue alphabet rather than for entropy coding relative to a source model.

    Ring-based lattice systems such as NTRU and ML-KEM~\cite{ntru1998,nist2024post} are related only at a broader structural level.
    They also work over ring-based algebraic domains, but their main concern is hardness-based encryption and key encapsulation,
    not canonical byte representation for residue-domain transport.

    Accordingly, the specific focus of this paper is the byte-to-residue representation layer itself:
    a fixed-width, self-delimiting codec with explicit correctness theorems and machine-checked proof traceability.

    \section{Conclusion}
    \label{sec:conclusion}

    The \BaseMLen{} codec isolates the byte-to-residue representation layer needed by finite-ring protocols.
    Its wire format consists of a fixed-width length prefix, a fixed-width state prefix, and a uniform \BaseM{} payload stream.
    For every fixed word width \(w\), every supported modulus, and every admissible byte string of length less than \(2^{w}\),
    decoding recovers the original bytes and is tolerant of any valid trailing suffix.

    The released Rust library~\cite{riabov2026encodingcrate} and the benchmark suite in the
    project repository indicate that the construction is practical,
    while the Lean 4 formalization in the repository~\cite{riabov2026encodingrepo}
    establishes fixed-width prefix inversion, payload-state bounds, and end-to-end stream correctness.
    Because the codec outputs canonical residue streams in \(\Zm\), it composes directly
    with the \(\Zm \leftrightarrow \Gm\) transport layer used in prior ring-mapping
    cryptosystems~\cite{kryvyi2022symmetric,kryvyi2025symmetric}.
    More broadly, it contributes a concrete message-representation layer to current work on
    symmetric cryptography over integer rings, as reflected in the SYMPZON agenda~\cite{sympzon}.

    \appendix

    \section{Benchmark methodology}
    \label{sec:bench}

    The Criterion benchmark suite in the project repository~\cite{riabov2026encodingrepo}
    is organized into four families.
    \begin{enumerate}
        \item Native byte-transduction benchmarks for
        \(\mathsf{encode\_bytes\_base\_m\_len}\) and
        \(\mathsf{decode\_bytes\_base\_m\_len}\), evaluated at moduli
        \(2,3,13,65,251,257\) and input sizes 32\,B, 1\,KiB, and 64\,KiB\@.
        \item UTF-8 smoke benchmarks for the text frontend \(\mathsf{encode\_text\_base\_m\_len}\)
        and \(\mathsf{decode\_text\_base\_m\_len}\) at modulus \(m=65\),
        using mixed UTF-8 inputs of size 1\,KiB and 64\,KiB\@.
        \item Exact \(\mathsf{BigUint}\) radix-conversion baselines for byte encode/decode at moduli \(2,3,13,65,251\)
        and the same three byte sizes; modulus \(257\) is excluded because the underlying radix-conversion
        routine supports moduli only up to \(256\).
        \item Reference ANS encode/decode benchmarks, implemented via the external \(\mathsf{constriction}\)
        library, at input sizes 32\,B, 1\,KiB, and 64\,KiB\@.
    \end{enumerate}

    Unless noted otherwise, the representative throughput values reported in \Cref{tab:benchmarks} were obtained
    on an Apple M3 Pro machine using short Criterion runs with sample size \(10\), warm-up time \(0.1\)\,s,
    and measurement time \(0.2\)\,s.

    \printbibliography[title={References}]

@article{duda2013ans,
    author       = {Duda, Jarek},
    title        = {Asymmetric Numeral Systems: Entropy Coding Combining Speed of {Huffman} Coding with Compression Rate of Arithmetic Coding},
    journal      = {arXiv preprint arXiv:1311.2540},
    year         = {2014},
    doi          = {10.48550/arXiv.1311.2540},
    eprint       = {1311.2540},
    archivePrefix= {arXiv},
    primaryClass = {cs.IT},
    url          = {https://arxiv.org/abs/1311.2540v2}
}

@online{sympzon,
    author  = {{European Commission}},
    title   = {{SYMPZON: Getting SYMmetric CryPtography Out of its Comfort ZONe}},
    year    = {2025},
    doi     = {10.3030/101160608},
    url     = {https://cordis.europa.eu/project/id/101160608},
    note    = {ERC Starting Grant, Horizon Europe, grant agreement No. 101160608, 2025--2030}
}

@software{riabov2026encodingcrate,
    author       = {Riabov, Kyrylo},
    title        = {sym-adv-encoding},
    year         = {2026},
    url          = {https://crates.io/crates/sym-adv-encoding},
    note         = {Rust crate}
}

@online{riabov2026encodingrepo,
    author       = {Riabov, Kyrylo},
    title        = {phd-symmetric-cryptography: Lean formalization of the base-n-len codec},
    year         = {2026},
    url          = {https://github.com/KyrylR/phd-symmetric-cryptography/blob/8e903222a7df1c4365592e4b16fa752a924b8977/lean/encoding-dr-1},
    note         = {GitHub repository}
}

@misc{giesen2014rans,
    author       = {Giesen, Fabian},
    title        = {{rANS} Notes},
    year         = {2014},
    url          = {https://fgiesen.wordpress.com/2014/02/02/rans-notes/},
    note         = {Practitioner notes on streaming rANS}
}

@article{shannon1949,
    author       = {Shannon, Claude E.},
    title        = {Communication Theory of Secrecy Systems},
    journal      = {Bell System Technical Journal},
    volume       = {28},
    number       = {4},
    year         = {1949},
    pages        = {656--715},
    doi          = {10.1002/j.1538-7305.1949.tb00928.x},
    url          = {https://pages.cs.wisc.edu/~rist/642-spring-2014/shannon-secrecy.pdf}
}

@misc{nist2024post,
    author       = {{National Institute of Standards and Technology}},
    title        = {Post-Quantum Cryptography Standards: {FIPS} 203 ({ML-KEM}), {FIPS} 204 ({ML-DSA}), {FIPS} 205 ({SLH-DSA})},
    year         = {2024},
    month        = {8},
    note         = {NIST post-quantum cryptography standards overview}
}

@incollection{ntru1998,
    author       = {Hoffstein, Jeffrey and Pipher, Jill and Silverman, Joseph H.},
    title        = {{NTRU}: A Ring-Based Public Key Cryptosystem},
    booktitle    = {Algorithmic Number Theory},
    series       = {Lecture Notes in Computer Science},
    volume       = {1423},
    publisher    = {Springer},
    year         = {1998},
    pages        = {267--288},
    doi          = {10.1007/BFb0054868},
    url          = {https://www.ntru.org/f/hps98.pdf}
}

@inproceedings{kryvyi2025symmetric,
    author       = {Kryvyi, Serhiy and Riabov, Kyrylo},
    title        = {Symmetric Cryptosystem Based on Ring Images},
    booktitle    = {Proceedings of the Workshop on Intelligent Information Technologies ({UkrProg-IIT} 2025)},
    series       = {CEUR Workshop Proceedings},
    volume       = {4049},
    publisher    = {CEUR-WS.org},
    year         = {2025},
    url          = {https://ceur-ws.org/Vol-4049/paper1.pdf}
}

@article{kryvyi2022symmetric,
    author       = {Kryvyi, S. L. and Opanasenko, V. N. and Grinenko, E. A. and Nortman, Yu. A.},
    title        = {Symmetric Information Exchange System Based on Ring Isomorphism},
    journal      = {Cybernetics and Systems Analysis},
    volume       = {58},
    number       = {5},
    year         = {2022},
    url          = {https://www.springerprofessional.de/en/symmetric-information-exchange-system-based-on-ring-isomorphism/23785398}
}

@techreport{rfc3629,
    author       = {Yergeau, Fran\c{c}ois},
    title        = {{UTF-8}, a Transformation Format of {ISO} 10646},
    institution  = {{RFC Editor}},
    type         = {RFC},
    number       = {3629},
    year         = {2003},
    month        = {11},
    doi          = {10.17487/RFC3629},
    url          = {https://www.rfc-editor.org/info/rfc3629}
}

@techreport{rfc4648,
    author       = {Josefsson, Simon},
    title        = {The Base16, Base32, and Base64 Data Encodings},
    institution  = {{RFC Editor}},
    type         = {RFC},
    number       = {4648},
    year         = {2006},
    month        = {10},
    doi          = {10.17487/RFC4648},
    url          = {https://www.rfc-editor.org/info/rfc4648}
}

@techreport{rfc8017,
    author       = {Moriarty, Kathleen and Kaliski, Burt and Jonsson, Jakob and Rusch, Andrew},
    title        = {{PKCS} \#1: {RSA} Cryptography Specifications Version 2.2},
    institution  = {{RFC Editor}},
    type         = {RFC},
    number       = {8017},
    year         = {2016},
    month        = {11},
    doi          = {10.17487/RFC8017},
    url          = {https://www.rfc-editor.org/info/rfc8017}
}

@techreport{rfc8949,
    author       = {Bormann, Carsten and Hoffman, Paul},
    title        = {Concise Binary Object Representation ({CBOR})},
    institution  = {{RFC Editor}},
    type         = {RFC},
    number       = {8949},
    year         = {2020},
    month        = {12},
    doi          = {10.17487/RFC8949},
    url          = {https://www.rfc-editor.org/info/rfc8949},
    note         = {STD 94}
}

@online{x690,
    author       = {{International Telecommunication Union}},
    title        = {{X.690}: Information Technology -- {ASN}.1 Encoding Rules: Specification of Basic Encoding Rules ({BER}), Canonical Encoding Rules ({CER}) and Distinguished Encoding Rules ({DER})},
    year         = {2021},
    month        = {2},
    url          = {https://www.itu.int/ITU-T/recommendations/rec.aspx?rec=14779&lang=en},
    note         = {Recommendation X.690 (02/2021) | ISO/IEC 8825-1:2021}
}

@inproceedings{demoura2015lean,
    author       = {de Moura, Leonardo and Kong, Soonho and Avigad, Jeremy and van Doorn, Floris and von Raumer, Jakob},
    title        = {The Lean Theorem Prover (System Description)},
    booktitle    = {Automated Deduction -- {CADE}-25},
    series       = {Lecture Notes in Computer Science},
    volume       = {9195},
    publisher    = {Springer},
    year         = {2015},
    pages        = {378--388}
}

@inproceedings{demoura2021lean4,
    author       = {de Moura, Leonardo and Ullrich, Sebastian},
    title        = {The Lean 4 Theorem Prover and Programming Language},
    booktitle    = {Automated Deduction -- {CADE} 28},
    series       = {Lecture Notes in Computer Science},
    volume       = {12699},
    publisher    = {Springer},
    year         = {2021},
    pages        = {625--635},
    doi          = {10.1007/978-3-030-79876-5_37},
    url          = {https://link.springer.com/chapter/10.1007/978-3-030-79876-5_37}
}
\end{document}